\begin{document}

\title[Bayesian Opponent Exploitation in Imperfect-Information Games]{\Large Bayesian Opponent Exploitation in Imperfect-Information Games}  
\author{Sam Ganzfried}
\affiliation{%
  \institution{Florida International University and Ganzfried Research}
  \department{School of Computing and Information Sciences}
  \streetaddress{11200 S.W. 8th Street}
  \city{Miami}
  \state{FL}
  \postcode{33199}
  \country{USA}}
  \author{Qingyun Sun}
\affiliation{%
  \institution{Stanford University}
  \department{School of Mathematics}
  \streetaddress{10 Comstock Circle}
  \city{Stanford}
  \state{CA}
  \postcode{94305}
  \country{USA}}

\begin{abstract}
Two fundamental problems in computational game theory are computing a Nash equilibrium and learning to exploit opponents given observations of their play (opponent exploitation). The latter is perhaps even more important than the former: Nash equilibrium does not have a compelling theoretical justification in game classes other than two-player zero-sum, and for all games one can potentially do better by exploiting perceived weaknesses of the opponent than by following a static equilibrium strategy throughout the match. The natural setting for opponent exploitation is the Bayesian setting where we have a prior model that is integrated with observations to create a posterior opponent model that we respond to. The most natural, and a well-studied prior distribution is the Dirichlet distribution. An exact polynomial-time algorithm is known for best-responding to the posterior distribution for an opponent assuming a Dirichlet prior with multinomial sampling in normal-form games; however, for imperfect-information games the best known algorithm is based on approximating an infinite integral without theoretical guarantees. We present the first exact algorithm for a natural class of imperfect-information games. We demonstrate that our algorithm runs quickly in practice and outperforms the best prior approaches. We also present an algorithm for the uniform prior setting.
\end{abstract}

\maketitle

\section{Introduction}
\label{se:intro}

Imagine you are playing a game repeatedly against one or more opponents. What algorithm should you use to maximize your performance? The classic ``solution concept'' in game theory is the Nash equilibrium. In a Nash equilibrium $\sigma$, each player is simultaneously maximizing his payoff assuming the opponents all follow their components of $\sigma$. So should we just find a Nash equilibrium strategy for ourselves and play it in all the game iterations?

Unfortunately, there are some complications. First, there can exist many Nash equilibria, and if the opponents are not following the same one that we have found (or are not following one at all), then our strategy would have no performance guarantees. Second, finding a Nash equilibrium is challenging computationally: it is PPAD-hard and is widely conjectured that no polynomial-time algorithms exist~\cite{Chen06:Settling}. These challenges apply to both extensive-form games (of both perfect and imperfect information) and strategic-form games, for games with more than two players and non-zero-sum games. While a particular Nash equilibrium may happen to perform well in practice,\footnote{An agent for 3-player limit Texas hold 'em computed by the counterfactual regret minimization algorithm (which converges to Nash equilibrium in certain games) performed well in practice despite a lack of theoretical justification~\cite{Gibson14:Regret}.} there is no theoretically compelling justification for why computing one and playing it repeatedly is a good approach. Two-player zero-sum games do not face these challenges: there exist polynomial-time algorithms for computing an equilibrium~\cite{Koller94:Fast}, and there exists a game value that is guaranteed in expectation in the worst case by all equilibrium strategies regardless of the strategy played by the opponent (and this value is the best worst-case guaranteed payoff for any of our strategies). However, even for this game class it would be desirable to deviate from equilibrium in order to learn and exploit perceived weaknesses of the opponent; for instance, if the opponent has played Rock in each of the first thousand iterations of rock-paper-scissors, it seems desirable to put additional weight on paper beyond the equilibrium value of $\frac{1}{3}.$

Thus, learning to exploit opponents' weaknesses is desirable in all game classes. One approach would be to construct an opponent model consisting of a single mixed strategy that we believe the opponent is playing given our observations of his play and a prior distribution (perhaps computed from a database of historical play). This approach has been successfully applied to exploit weak agents in limit Texas hold 'em poker, a large imperfect-information game~\cite{Ganzfried11:Game}.\footnote{This approach used an approximate Nash equilibrium strategy as the prior and is applicable even when historical data is not available, though if additional data were available a more informed prior that capitalizes on the data would be preferable.} A drawback is that it is potentially not robust. It is very unlikely that the opponent's strategy matches this point estimate exactly, and we could perform poorly if our model is incorrect. A more robust approach, which is the natural one to use in this setting, is to use a Bayesian model, where the prior and posterior are full distributions over mixed strategies of the opponent, not single mixed strategies. A natural prior distribution, which has been studied and applied in this context, is the Dirichlet distribution. The pdf of the Dirichlet distribution is the belief that the probabilities of $K$ rival events are $x_i$ given that each event has been observed $\alpha_i -1$ times: $f(x,\alpha) = \frac{1}{B(\alpha)}\prod x^{\alpha_i-1}_i.$\footnote{\label{fo:Dirichlet}$B(\alpha)$  is the beta function $B(\alpha) = \frac{\prod{\Gamma(\alpha_i)}}{\Gamma \left(\sum_i \alpha_i\right)}$, where $\Gamma(n) = (n-1)!$ is the gamma function.} Some notable properties are that the mean is $E[X_i] = \frac{\alpha_i}{\sum_k \alpha_k}$ and that, assuming multinomial sampling, the posterior after including new observations is also Dirichlet, with parameters updated based on the new observations. 

Prior work has presented an efficient algorithm for optimally exploiting an opponent in normal-form games in the Bayesian setting with a Dirichlet prior~\cite{Fudenberg98:Theory}. The algorithm is essentially the fictitious play rule~\cite{Brown51:Iterative}. Given prior counts $\alpha_i$ for each opponent action, the algorithm increments the counter for an action by one each time it is observed, and then best responds to a model for the opponent where he plays each strategy in proportion to the counters. This algorithm would also extend directly to sequential extensive-form games of perfect information, where we maintain independent counters at each of the opponent's decision nodes; this would also work for games of imperfect information where the opponent's private information is observed after each round of play (so that we would know exactly what information set he took the observed action from). For all of these game classes the algorithm would apply to both zero and general-sum games, for any number of players. However, it would not apply to imperfect-information games where the opponent's private information is not observed after play. 

An algorithm exists for approximating a Bayesian best response in imperfect-information games, which uses importance sampling to approximate an infinite integral. This algorithm has been successfully applied to limit Texas hold 'em poker~\cite{Southey05:Bayes}.\footnote{In addition to Bayesian Best Response, the paper also considers heuristic approaches for approximating several other response functions: Max A Posteriori Response and Thompson's Response.} However, it is only a heuristic approach with no guarantees. The authors state,
\begin{quote}
``Computing the integral over opponent strategies depends on the form of the prior but is difficult in any event. For Dirichlet priors, it is possible to compute the posterior exactly but the calculation is expensive except for small games with relatively few observations. This makes the exact BBR an ideal goal rather than a practical approach. For real play, we must consider approximations to BBR.''
\end{quote}
\normalsize
However, we see no justification for the claim that it is possible to compute the posterior exactly in prior work, and there could easily be no closed-form solution. In this paper we present a solution for this problem, leading to the first exact optimal algorithm for performing Bayesian opponent exploitation in imperfect-information games. While the claim is correct that the computation is expensive for large games, we show that in a small (yet realistic) game it outperforms all prior approaches. Furthermore, we show that the computation can run extremely quickly even for large number of observations (though it can run into numerical instability), contradicting the second claim. We also present general theory, and an algorithm for another natural prior distribution (uniform distribution over a polyhedron). 

\section{Meta-algorithm}
\label{se:meta-algorithm}
The problem of developing efficient algorithms for optimizing against a posterior distribution, which is a probability distribution over mixed strategies for the opponent (which are themselves distributions over pure strategies) seems daunting. We need to be able to compactly represent the posterior distribution and efficiently compute a best response to it. Fortunately, we show that our payoff of playing any strategy $\sigma_i$ against a probability distribution over mixed strategies for the opponent equals our payoff of playing $\sigma_i$ against the mean of the distribution. Thus, we need only represent and respond to the single strategy that is the mean of the distribution, and not to the full distribution. While this result was likely known previously, we have not seen it stated explicitly, and it is important enough to be highlighted so that it is on the radar of the AI community. 
 
Suppose the opponent is playing mixed strategy $\sigma_{-i}$ where $\sigma_{-i}(s_{-j})$ is the probability that he plays pure strategy $s_{-j} \in S_{-j}$. By definition of expected utility,
$u_i(\sigma_i, \sigma_{-i}) = \sum_{s_{-j} \in S_{-j}} \sigma_{-i}(s_{-j}) u_i(\sigma_i, s_{-j}).$

We can generalize this naturally to the case where the opponent is playing according to a probability distribution with pdf $f_{-i}$ over mixed strategies:
$$ u_i(\sigma_i, f_{-i}) = \int _{\sigma_{-i} \in \Sigma_{-i}} \left[ f_{-i}(\sigma_{-i}) \cdot u_i(\sigma_i, \sigma_{-i}) \right].$$
\normalsize
Let $\overline{f_{-i}}$ denote the mean of $f_{-i}$. That is, $\overline{f_{-i}}$ is the mixed strategy that selects $s_{-j}$ with probability
$$\int _{\sigma_{-i} \in \Sigma_{-i}} \left[ \sigma_{-i}(s_{-j}) \cdot f_{-i}(\sigma_{-i})   \right] .$$
\normalsize
Then we have the following:

\small
\begin{theorem}
$$u_i(\sigma_i, \overline{f_{-i}}) = u_i(\sigma_i, f_{-i}).$$ That is, the payoff against the mean of a strategy distribution equals the payoff against the full distribution.
\label{th:mean}
\end{theorem}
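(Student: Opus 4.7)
The plan is to prove the identity by direct manipulation: expand $u_i(\sigma_i, f_{-i})$ using the integral definition given just above the theorem, substitute the finite-sum expansion of $u_i(\sigma_i, \sigma_{-i})$ in terms of pure strategies of the opponent, swap the order of integration and summation, and then recognize the inner integral as precisely $\overline{f_{-i}}(s_{-j})$. That collapses the expression into $\sum_{s_{-j}} \overline{f_{-i}}(s_{-j}) \, u_i(\sigma_i, s_{-j})$, which is $u_i(\sigma_i, \overline{f_{-i}})$ by the first displayed equation. So morally the theorem is a straightforward application of linearity of expectation.

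More concretely, I would first write
$$u_i(\sigma_i, f_{-i}) = \int_{\sigma_{-i} \in \Sigma_{-i}} f_{-i}(\sigma_{-i}) \sum_{s_{-j} \in S_{-j}} \sigma_{-i}(s_{-j}) \, u_i(\sigma_i, s_{-j}) \, d\sigma_{-i},$$
pull the finite sum over $s_{-j}$ outside the integral, and pull the scalar $u_i(\sigma_i, s_{-j})$ out of the integral since it does not depend on $\sigma_{-i}$. The remaining integrand is $f_{-i}(\sigma_{-i}) \cdot \sigma_{-i}(s_{-j})$, whose integral over $\Sigma_{-i}$ is the stated definition of $\overline{f_{-i}}(s_{-j})$. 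Substituting back and applying the definition of expected utility against a mixed strategy yields $u_i(\sigma_i, \overline{f_{-i}})$, as desired.

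The only step that requires a moment's thought is the swap of sum and integral. Here there is essentially no obstacle: the sum over $s_{-j}$ is finite because $S_{-j}$ is a finite set of pure strategies, and $f_{-i}$ is a probability density (hence integrable) on the compact strategy space $\Sigma_{-i}$, while each $\sigma_{-i}(s_{-j}) \in [0,1]$ is bounded. Fubini's theorem (or even plain linearity of the Lebesgue integral for a finite sum) immediately justifies the interchange, so no subtle regularity argument is needed. The entire proof should fit comfortably in a short chain of equalities.
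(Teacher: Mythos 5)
Your proposal is correct and is essentially the paper's own proof read in reverse: the paper starts from $u_i(\sigma_i, \overline{f_{-i}})$, expands the mean, and swaps the finite sum with the integral to recover the integral form of $u_i(\sigma_i, f_{-i})$, which is exactly your chain of equalities in the opposite direction. Your added remark that the interchange is trivially justified by the finiteness of $S_{-j}$ is a harmless (and correct) elaboration the paper leaves implicit.
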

\normalsize
\begin{proof}
\begin{eqnarray*}
&& u_i(\sigma_i, \overline{f_{-i}}) \\
&= &\sum_{s_{-j} \in S_{-j}} \left[ u_i(\sigma_i, s_{-j}) \int _{\sigma_{-i} \in \Sigma_{-i}} \left[ \sigma_{-i}(s_{-j}) \cdot f_{-i}(\sigma_{-i})   \right] \right] \\
& = & \sum_{s_{-j} \in S_{-j}} \left[ \int _{\sigma_{-i} \in \Sigma_{-i}} \left[ u_i(\sigma_i, s_{-j}) \cdot \sigma_{-i}(s_{-j}) \cdot f_{-i}(\sigma_{-i}) \right] \right] \\
& = & \int_{\sigma_{-i} \in \Sigma_{-i}} \left[ \sum_{j \in S_{-j}} \left[ u_i(\sigma_i, s_{-j}) \cdot \sigma_{-i}(s_{-j}) \cdot f_{-i}(\sigma_{-i}) \right] \right] \\
& = & \int_{\sigma_{-i} \in \Sigma_{-i}} \left[ u_i(\sigma_i, \sigma_{-i}) \cdot f_{-i}(\sigma_{-i}) \right] \\
& = & u_i(\sigma_i, f_{-i})
\end{eqnarray*}
\end{proof}
\normalsize

Theorem~\ref{th:mean} applies to both normal and extensive-form games (with perfect or imperfect information), for any number of players ($\sigma_{-i}$ could be a joint strategy profile for all opposing agents).

Now suppose the opponent is playing according a prior distribution $p(\sigma_{-i})$, and let $p(\sigma_{-i} | x)$ denote the posterior probability given observations $x$. Let $\overline{p(\sigma_{-i} | x)}$ denote the mean of $p(\sigma_{-i} | x)$. As an immediate consequence of Theorem~\ref{th:mean}, we have the following corollary. 
\begin{corollary}
$u_i(\sigma_i, \overline{p(\sigma_{-i} | x)}) = u_i(\sigma_i, p(\sigma_{-i} | x))$. 
\label{co:posterior-mean}
\end{corollary}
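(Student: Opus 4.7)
The plan is to observe that Corollary~\ref{co:posterior-mean} is just Theorem~\ref{th:mean} applied to a specific choice of $f_{-i}$. Concretely, I would set $f_{-i}(\sigma_{-i}) := p(\sigma_{-i} \mid x)$. The posterior is itself a probability density over the mixed-strategy simplex $\Sigma_{-i}$ (this is the standard property of conditional distributions: conditioning a joint distribution on observations $x$ yields a proper distribution over the remaining variable $\sigma_{-i}$, with normalizer given by the marginal likelihood of $x$). Thus $f_{-i}$ as defined satisfies the hypotheses of Theorem~\ref{th:mean}.

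Next I would verify that the mean appearing in the corollary matches the mean in Theorem~\ref{th:mean}. By definition, $\overline{p(\sigma_{-i} \mid x)}$ is the mixed strategy assigning to each pure strategy $s_{-j}$ the probability
$$\int_{\sigma_{-i} \in \Sigma_{-i}} \sigma_{-i}(s_{-j}) \cdot p(\sigma_{-i} \mid x),$$
which is exactly the definition of $\overline{f_{-i}}$ given immediately before Theorem~\ref{th:mean} once we substitute $f_{-i} = p(\,\cdot\mid x)$. Hence $\overline{p(\sigma_{-i} \mid x)} = \overline{f_{-i}}$.

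The conclusion then follows by a single invocation of Theorem~\ref{th:mean}:
$$u_i\bigl(\sigma_i, \overline{p(\sigma_{-i} \mid x)}\bigr) = u_i\bigl(\sigma_i, \overline{f_{-i}}\bigr) = u_i(\sigma_i, f_{-i}) = u_i\bigl(\sigma_i, p(\sigma_{-i} \mid x)\bigr).$$

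There is essentially no obstacle here; the only thing to be careful about is making the trivial point that conditioning does not take us outside the class of distributions to which Theorem~\ref{th:mean} applies (the theorem is stated for arbitrary pdfs $f_{-i}$ over $\Sigma_{-i}$, and a posterior given observations is such a pdf). No regularity assumptions beyond those implicitly used in Theorem~\ref{th:mean} itself (namely, that the relevant integrals exist and Fubini's theorem can be invoked to swap the sum over $S_{-j}$ with the integral over $\Sigma_{-i}$) are needed.
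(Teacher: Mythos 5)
Your proposal is correct and matches the paper exactly: the paper presents this corollary as an immediate consequence of Theorem~\ref{th:mean}, obtained by taking $f_{-i}$ to be the posterior $p(\cdot \mid x)$, which is precisely your argument. The only addition you make is to spell out why the posterior qualifies as a pdf over $\Sigma_{-i}$, which is a harmless elaboration of the same one-line substitution.
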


Corollary~\ref{co:posterior-mean} implies the meta-procedure for optimizing performance against an opponent who uses $p$ given by Algorithm~\ref{al:meta}. 

\begin{algorithm}[!ht]
\caption{\normalsize Meta-algorithm for Bayesian opponent exploitation \normalsize}
\label{al:meta} 
\normalsize
\textbf{Inputs}: Prior distribution $p_0$, response functions $r_t$ for $0 \leq t \leq T$
\begin{algorithmic}
\State $M_0 \gets \overline{p_0(\sigma_{-i})}$
\State $R_0 \gets r_0(M_0)$
\State Play according to $R_0$
\For {$t = 1$ to $T$}
\State $x_t \gets $ observations of opponent's play at time step $t$
\State $p_t \gets$ posterior distribution of opponent's strategy given prior $p_{t-1}$ and observations $x_t$ 
\State $M_t \gets$ mean of $p_t$
\State $R_t \gets r_t(M_t)$
\State Play according to $R_t$
\EndFor
\end{algorithmic}
\normalsize
\end{algorithm}

There are several challenges for applying Algorithm~\ref{al:meta}. First, it assumes that we can compactly represent the prior and posterior distributions $p_t$, which have infinite domain (the set of opponents' mixed strategy profiles). Second, it requires a procedure to efficiently compute the posterior distributions given the prior and the observations, which requires updating potentially infinitely many strategies. Third, it requires an efficient procedure to compute the mean of $p_t$. And fourth, it requires that the full posterior distribution from one round be compactly represented to be used as the prior in the next round. We can address the fourth challenge by using a modified update step: 
$$p_t \gets \mbox{ posterior distribution of opponent's strategy} \mbox{ given prior } p_0 \mbox{ and observations } \\x_1,\ldots,x_t.$$ 
We will be using this new rule in our main algorithm.

The response functions $r_t$ could be standard best response, for which linear-time algorithms exist in games of imperfect information (and a recent approach has enabled efficient computation in extremely large games~\cite{Johanson11:Accelerating}). It could also be a more robust response, e.g., one that places a limit on the exploitability of our own strategy, perhaps one that varies over time based on our performance (or a lower-variance estimator of it)~\cite{Johanson07:Computing,Johanson09:Data,Ganzfried15:Safe}. In particular, the restricted Nash response has been demonstrated to outperform best response against agents in limit Texas hold 'em whose actual strategy may differ substantially from the exact model~\cite{Johanson07:Computing}.

\section{Robustness of the approach}
\label{se:robustness}
It has been pointed out that, empirically, the approach described is not robust: if we play a full best response to a point estimate of the opponent's strategy we can have very high exploitability ourselves, and could perform very poorly if in fact we are wrong about our model~\cite{Johanson07:Computing}. This could happen for several reasons. Our modeling algorithm could be incorrect: it could make an incorrect assumption about the prior and form of the opponent's distribution. This could happen for several reasons. One reason is that the opponent could actually be changing his strategy over time (possibly either by improving his own play or by adapting to our play), in which case a model that assumes a static opponent could be predicting a strategy that the opponent is no longer using. The opponent could also have modified his play strategically in an attempt to deceive us by playing one way initially and then counter-exploiting us as we attempt to exploit the model we have formed from his initial strategy (e.g., the opponent initially starts off playing extremely conservatively, then switches to a more aggressive style as he suspects we will start to exploit his extreme conservatism). His initial strategy need not arise from deception: it is also possible that simply due to chance events (either due to his own randomization in his strategy or due to the states of private information selected by chance) the opponent has appeared to be playing in a certain way (e.g., very conservatively), and as he becomes aware of this conservative ``image,'' naturally it occurs to him to modify his play by becoming more aggressive. 

A second reason that we could be wrong in our opponent model other than our modeling algorithm incorrectly modeling the opponents' dynamic approach is that our observations of his play are very noisy (due to both randomization in the opponent's strategy and to the private information selected by chance), particularly over a small sample.  Even if our approach is correct and the opponent is in fact playing a static strategy according to the distribution assumed by the modeling algorithm, it is very unlikely that our actual perception of his strategy is precisely correct.

A third reason, of course, is that the opponent may be following a static strategy that does not exactly conform to our model for the prior and/or sampling method used to generate the posterior.

We would like an approach that is robust in the event that our model of the opponent's strategy is incorrect, whichever the cause may be. Prior work has considered a model where the opponent plays according to a model $x_{-i}$ with probability $p$ and with probability $1-p$ plays a nemesis to our strategy~\cite{Johanson07:Computing}. For carefully selected values of $p$ (typically 0.95 or 0.99), they show that this can achieve a relatively high level of exploitation (similar to a full best response) with a significantly smaller worst-case exploitability. We note that, as described in Section~\ref{se:meta-algorithm}, Algorithm~\ref{al:meta} can be integrated with any response function, not necessarily a full best response, and so $r_t$ could be selected to be the Restricted Nash Response from prior work~\cite{Johanson07:Computing}. However, it seems excessively conservative to give the opponent credit for playing a full nemesis to our strategy; if we are relatively confident in our opponent model, then a more reasonable robustness criterion would be to explore performance as we allow the opponent's strategy to differ by a small amount from the predicted strategy (i.e., the opponent is playing a strategy that is very close to our model, and not necessarily putting weight on a full nemesis to our strategy).

Suppose we believe the opponent is playing $x_{-i}$, while he is actually playing $x'_{-i}$. Let $M$ be the maximum absolute value of a utility to player $i$, and let $N$ be the maximum number of actions available to a player. Let $\epsilon > 0$ be arbitrary. Then, if $|x_{-i}(j) - x'_{-i}(j)| < \delta$ for all $j$, where $\delta = \frac{\epsilon}{MN}$, 

\begin{eqnarray*}
&& |u_i(\sigma^*,x_{-i}) - u_i(\sigma^*,x'_{-i})| = \left|\sum_j (x_{-i}(j) - x'_{-i}(j))u_i(\sigma^*,s_{-j}) \right| \\
& <= &\sum_j \left| \left( x_{-i}(j) - x'_{-i}(j) \right) u_i(\sigma^*,s_{-j}) \right| <= \sum_j \left( \left|x_{-i}(j) - x'_{-i}(j)\right| \cdot \left|u_i(\sigma^*,s_{-j})\right| \right) \\
& <= &\sum_j  \left( |x_{-i}(j) - x'_{-i}(j)| \cdot M \right) < M \sum_j \delta <= MN \delta = MN \cdot \frac{\epsilon}{MN} = \epsilon\\
\end{eqnarray*}
This same analysis can be applied directly to show that our payoff is continuous in the opponent's strategy for many popular distance functions (i.e., for any distance function where one strategy can get arbitrarily close to another as the components get arbitrarily close). For instance this would apply to L1, L2, and earth mover's distance, which have been applied previously to compute distances been strategies within opponent exploitation algorithms~\cite{Ganzfried11:Game}. Thus, if we are slightly off in our model of the opponent's strategy, even if we are doing a full best response we will do only slightly worse.

\section{Exploitation algorithm for Dirichlet prior}
\label{se:main-algorithm}
As described in Section~\ref{se:intro} the Dirichlet distribution is the conjugate prior for the multinomial distribution, and therefore the posterior is also a Dirichlet distribution, with the parameters $\alpha_i$ updated to reflect the new observations. Thus, the mean of the posterior can be computed efficiently by computing the strategy for the opponent in which he plays each strategy in proportion to the updated weight, and Algorithm~\ref{al:meta} yields an exact efficient algorithm for computing the Bayesian best response in normal-form games with a Dirichlet prior. However, the algorithm does not apply to games of imperfect information since we do not observe the private information held by the opponent, and therefore do not know which of his action counters we should increment. In this section we will present a new algorithm for this setting. We present it in the context of a
representative motivating game where the opponent is dealt a state of private information and then takes publicly-observable action. 
and present the algorithm for the general setting in Section~\ref{se:general-setting}.   

We are interested in studying the following two-player game setting. Player 1 is given private information state $x_i$ (according to a probability distribution). Then he takes a publicly observable action $a_i$. Player 2 then takes an action after observing player 1's action (but not his private information), and both players receive a payoff. We are interested in player 2's problem of inferring the (assumed stationary) strategy of player 1 after repeated observations of the public action taken (but not the private information). 
Note that this setting is very general. For example, in poker $x_i$ could denote the opponent's private card(s) and $a_i$ the amount bet, and in an ad auction $x_i$ could denote his valuation (e.g., high or low), and $a_i$ could denote the amount he bids~\cite{Tang16:Optimal}.

\subsection{Motivating game and algorithm}
\label{se:motivating-game}
For concreteness and motivation, consider the following poker game instantiation of this setting, where we play the role of player 2. Let's assume that in this two-player game, player 1 is dealt a King (K) and Jack (J) with probability $\frac{1}{2}$, while player 2 is always dealt a Queen. Player 1 is allowed to make a big bet of \$10 (b) or a small bet of \$1 (s), and player 2 is allowed to call or fold. If player 2 folds, then player 1 wins the \$2 pot (for a profit of \$1); if player 1 bets and player 2 calls then the player with the higher card wins the \$2 pot plus the size of the bet. 

\begin{figure}[!ht]
\centering
\includegraphics[scale=0.8]{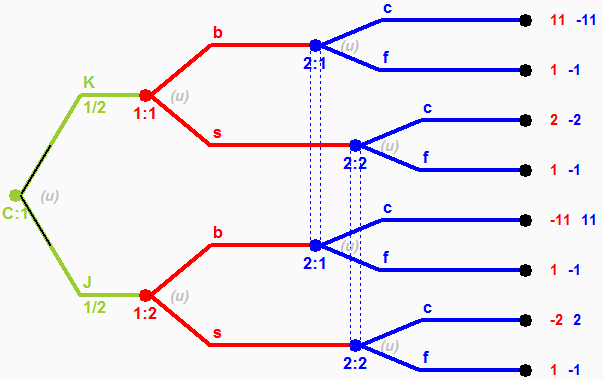}
\caption{Chance deals player 1 king or jack with probability $\frac{1}{2}$ at the green node. Then player 1 selects big or small bet at a red node. Then player 2 chooses call or fold at a blue node.}
\label{fi:game}
\end{figure}

If we observe player 1's card after each hand, then we can apply the approach described above, where we maintain a counter for player 1 choosing each action with each card that is incremented for the selected action. However, if we do not observe player 1's card after the hand (e.g., if we fold), then we would not know whether to increment the counter for the king or the jack. 

To simplify analysis, we will assume that we never observe the opponent's private card after the hand (which is not realistic since we would observe his card if he bets and we call); we can assume that we do not observe our payoff either until all game iterations are complete, since that could allow us to draw inferences about the opponent's card. There are no known algorithms even for the simplified case of fully unobservable opponent's private information. We suspect that an algorithm for the case when the opponent's private information is sometimes observed can be constructed based on our algorithm, and we plan to study this problem in future work.

Let $C$ denote player 1's card and $A$ denote his action. Then $P(C=K) = P(C=J) = \frac{1}{2}$. Let $q_{b|K} \equiv P(A=b|C=K)$ denote the probability that player 1 makes a big bet with king, 
$q_{s|K} \equiv P(A=s|C=K)$ denote the probability that player 1 makes a small bet with king, then $q_{s|K}=1-q_{s|K}$. 
If we were using a Dirichlet prior with parameters $\alpha_1$ and $\alpha_2$, denoted $\mbox{Dir}(\alpha_1,\alpha_2)$, (where $\alpha_1-1$ is the number of times that action $b$ has been observed with a king and $\alpha_2-1$ is the number of times $s$ has been observed with a king), then 
the probability density function is 
$$f_{\mbox{Dir}}(q_{b|K}, q_{s|K};\alpha_1,\alpha_2) 
= \frac{(q_{b|K}^{\alpha_1-1})(q_{s|K})^{\alpha_2-1}}{B(\alpha_1,\alpha_2)} 
= \frac{(q_{b|K}^{\alpha_1-1})(1-q_{b|K})^{\alpha_2-1}}{B(\alpha_1,\alpha_2)}$$
\normalsize
In general given observations O, Bayes' rule gives the following, where $q$ is a mixed strategy that is given mass $p(q)$ under the prior, and $p(q|O)$ is the posterior:

\begin{eqnarray*}
p(q|O) &= &\frac{P(O|q)p(q)}{P(O)} \\
&= &\frac{\sum_{c \in C}P(O,C=c|q)p(q)}{p(O)} \\
&= &\frac{\sum_c P(O|C=c,q)p(c)p(q)}{p(O)}\\
&= &\frac{(P(O|K,q)p(K) + P(O|J,q)p(J))p(q)}{p(O)}\\
&= &\frac{(P(O|K,q) + P(O|J,q))p(q)}{2p(O)}\\
&= &\frac{q_{O|K}p(q) + q_{O|J}p(q)}{2p(O)}
\end{eqnarray*}
\normalsize

Assume action $b$ was observed in a new time step but player 1's card was not. Let $\alpha_{Kb}-1$ be the number of times we observed him play $b$ with $K$ according to the prior, etc. 

\scriptsize
\begin{equation*}
P(q|O) = \frac{q_{b|K}^{\alpha_{Kb}} (1-q_{b|K})^{\alpha_{Ks}-1}q_{b|J}^{\alpha_{Jb}-1}(1-q_{b|J})^{\alpha_{Js}-1}
+q_{b|K}^{\alpha_{Kb}-1} (1-q_{b|K})^{\alpha_{Ks}-1}q_{b|J}^{\alpha_{Jb}}(1-q_{b|J})^{\alpha_{Js}-1}}
{2B(\alpha_{Kb},\alpha_{Ks})B(\alpha_{Jb},\alpha_{Js})p(O)}
\end{equation*}
\normalsize
The general expression for the mean of a continuous random variable is
\begin{eqnarray*}
E[X] &= &\int_x x P(X=x)dx = \int_x x \int_y p(X,Y)dydx \\
&= &\int_x \int_y x p(X,Y)dydx
\end{eqnarray*}
\normalsize
Now we compute the mean of the posterior of the opponent's probability of playing $b$ with J.
\footnotesize
\begin{eqnarray*}
& &P(A = b | O, C=J)  \\
&= &\int_{q_{b|J}} q_{b|J}P(q_{b|J}|O)dq_{b|J}  \\
&= &\frac{\int_{q_{b|J}} \int_{q_{b|K}} q_{b|J} P(q|O) dq_{b|K}dq_{b|J}}{p(O)}\\
&= &\frac{\int \int (q_{b|K}^{\alpha_{Kb}} (1-q_{b|K})^{\alpha_{Ks}-1}q_{b|J}^{\alpha_{Jb}}(1-q_{b|J})^{\alpha_{Js}-1} + q_{b|K}^{\alpha_{Kb}-1} (1-q_{b|K})^{\alpha_{Ks}-1}q_{b|J}^{\alpha_{Jb}+1}(1-q_{b|J})^{\alpha_{Js}-1}) dq_{b|K}dq_{b|J}}
{2B(\alpha_{Kb},\alpha_{Ks})B(\alpha_{Jb},\alpha_{Js})p(O)}\\
&= & \frac{B(\alpha_{Kb}+1,\alpha_{Ks})B(\alpha_{Jb}+1,\alpha_{Js}) + B(\alpha_{Kb}, \alpha_{Ks})B(\alpha_{Jb}+2,\alpha_{Js})}
{2B(\alpha_{Kb},\alpha_{Ks})B(\alpha_{Jb},\alpha_{Js})p(O)}\\
&= &\frac{B(\alpha_{Kb}+1,\alpha_{Ks})B(\alpha_{Jb}+1,\alpha_{Js}) + B(\alpha_{Kb}, \alpha_{Ks})B(\alpha_{Jb}+2,\alpha_{Js})}{Z}
\end{eqnarray*}
\normalsize

The final equation can be obtained by observing that 
$$\int_{q_{b|J}} \int_{q_{b|K}} \frac{q_{b|K}^{\alpha_{Kb}} (1-q_{b|K})^{\alpha_{Ks}-1}q_{b|J}^{\alpha_{Jb}}(1-q_{b|J})^{\alpha_{Js}-1}}{B(\alpha_{Kb}+1,\alpha_{Ks})B(\alpha_{Jb}+1,\alpha_{Js})} dq_{b|K}dq_{b|J}$$
$$= \int_{q_{b|J}} \frac{q_{b|J}^{\alpha_{Jb}}(1-q_{b|J})^{\alpha_{Js}-1}}{B(\alpha_{Jb}+1,b_J)} \int_{q_{b|K}} \frac{q_{b|K}^{\alpha_{Kb}} (1-q_{b|K})^{\alpha_{Ks}-1}}{B(\alpha_{Kb}+1,\alpha_{Ks})} dq_{b|K}dq_{b|J}$$
$$= \int_{q_{b|J}} \frac{q_{b|J}^{\alpha_{Jb}}(1-q_{b|J})^{\alpha_{Js}-1}}{B(\alpha_{Jb}+1,\alpha_{Js})} 1 dq_{b|J} = 1,$$
\normalsize
since the integrands are themselves Dirichlet and all probability distributions integrate to 1. Similarly
$$\int \int \frac{(q_{b|K}^{\alpha_{Kb}-1} (1-q_{b|K})^{\alpha_{Ks}-1}q_{b|J}^{\alpha_{Jb}+1}(1-q_{b|J})^{\alpha_{Js}-1})}{B(\alpha_{Kb}, \alpha_{Ks})B(\alpha_{Jb}+2,\alpha_{Js})} = 1.$$
\normalsize

\normalsize

\normalsize
\normalsize

 Letting $Z$ denote the denominator, we have

 \begin{equation} \label{eq:poker-posterior} P(b|O,J) = \frac{B(\alpha_{Kb}+1,\alpha_{Ks})B(\alpha_{Jb}+1,\alpha_{Js}) + B(\alpha_{Kb}, \alpha_{Ks})B(\alpha_{Jb}+2,\alpha_{Js})}{Z}
 \end{equation}

 where the normalization term $Z$ is equal to
 \begin{eqnarray*}
 &&B(\alpha_{Kb}+1,\alpha_{Ks})B(\alpha_{Jb}+1,\alpha_{Js}) + B(\alpha_{Kb}, \alpha_{Ks})B(\alpha_{Jb}+2,\alpha_{Js}) \\ 
 && + B(\alpha_{Kb}+1,\alpha_{Ks})B(\alpha_{Jb},\alpha_{Js}+1) + B(\alpha_{Kb}, \alpha_{Ks})B(\alpha_{Jb}+1,\alpha_{Js}+1)
 \end{eqnarray*}
\normalsize
$P(s|O,J), P(b|O,K),$ and $P(s|O,K)$ can be computed analogously.  
As stated earlier, $B(\alpha) = \frac{\prod{\Gamma(\alpha_i)}}{\Gamma \left(\sum_i \alpha_i\right)}$ where $\Gamma(n) = (n-1)!$, which can be computed efficiently. 
 
Note that the algorithm we have presented applies for the case where we play one more game iteration and collect one additional observation. However, it is problematic for the general case we are interested in where we play many game iterations, since the posterior distribution  is not Dirichlet, and therefore we cannot just apply the same procedure in the next iteration using the computed posterior as the new prior. We will need to derive a new expression for $P(b|O,J)$ for this setting. Suppose that we have observed the opponent play action $b$ for $\theta_b$ times and $s$ for $\theta_s$ times (in addition to the number of fictitious observations reflected in the prior $\alpha$), though we do not observe his card. Note that there are $\theta_b+1$ possible ways that he could have played $b$ $\theta_b$ times: 0 times with K and $\theta_b$ with J, 1 time with K and $\theta_b-1$ with J, etc. Thus the expression for $p(q|O)$ will have $\theta_b+1$ terms in it instead of two (we can view $\theta_b+1$ as a constant if we assume that the number of game iterations is a constant, but in any case this is linear in the number of iterations).
We study generalization to $n$ private information states and $m$ actions in Section~\ref{se:general-setting}.

We have the new equation
\small
$$P(q|O) = \frac{\sum_{i=0}^{\theta_b} \sum_{j=0}^{\theta_s} q_{b|K}^{\alpha_{Kb}-1+i} (1-q_{b|K})^{\alpha_{Ks}-1+j}q_{b|J}^{\alpha_{Jb}-1+(\theta_b-i)}(1-q_{b|J})^{\alpha_{Js}-1+(\theta_s-j)}}{2B(\alpha_{Kb},\alpha_{Ks})B(\alpha_{Jb},\alpha_{Js})p(O)}$$
\normalsize
Using similar reasoning as above, this gives
\begin{equation}\label{eq:poker-general}  
P(b|O,J) = \frac{\sum_{i=0}^{\theta_b} \sum_{j=0}^{\theta_s} B(\alpha_{Kb}+i,\alpha_{Ks}+j)B(\alpha_{Jb}+\theta_b-i+1,\alpha_{Js}+\theta_s-j)}{Z}
\end{equation}
\normalsize

 The normalization term is 
 \begin{eqnarray*}
 Z &= &\sum_i \sum_j [B(\alpha_{Kb}+i,\alpha_{Ks}+j)B(\alpha_{Jb}+\theta_b-i+1,\alpha_{Js}+\theta_s-j) \\
 &+ &B(\alpha_{Kb}+i,\alpha_{Ks}+j)B(\alpha_{Jb}+\theta_b-i,\alpha_{Js}+\theta_s-j+1)] 
 \end{eqnarray*}
 \normalsize

Thus the algorithm for responding to the opponent is the following. We start with the prior counters on each private information-action combination, $\alpha_{Kb},\alpha_{Ks}$, etc. We keep separate counters $\theta_b, \theta_s$ for the number of times we have observed each action during play. Then we combine these counters according to Equation~\ref{eq:poker-general} in order to compute the strategy for the opponent that is the mean of the posterior given the prior and observations, and we best respond to this strategy, which gives us the same payoff as best responding to the full posterior distribution according to Theorem~\ref{th:mean}. There are only O($n^2$) terms in the expression in Equation~\ref{eq:poker-general}, so this algorithm is efficient.

\subsection{Example}
\label{se:example}
Suppose the prior is that the opponent played b with K 10 times, played s with K 3 times, played b with J 4 times, and played s with J 9 times. Thus $\alpha_{Kb} = 10,\alpha_{Ks} = 3,\alpha_{Jb} = 4,\alpha_{Js}=9.$ Now suppose we observe him play b at the next iteration. Applying our algorithm using Equation~\ref{eq:poker-posterior} gives

$$p(b|O,J) = \frac{B(11,3)B(5,9) + B(10,3)(6,9)}{Z}$$ 
$$= 0.00116550 \cdot 0.00015540 + 0.00151515 \cdot 0.00005550 = \frac{2.65209525e^{-7}}{Z}$$

$$p(s|O,J) = \frac{B(11,3)B(4,10) + B(10,3)(5,10)}{Z}$$ 
$$= 0.00116550 \cdot 0.00034965 + 0.00151515 \cdot 0.00009990 = \frac{5.5888056e^{-7}}{Z}$$


\normalsize

$$\longrightarrow p(b|O,J) = \frac{2.65209525e^{-7}}{2.65209525e^{-7} + 5.5888056e^{-7}} = 0.3218210361.$$
\normalsize

So we think that with a jack he is playing a strategy that bets big with probability 0.322 and small with probability 0.678. Notice that previously we thought his probability of betting big with a jack was $\frac{4}{13} = 0.308$, and had we been in the setting where we always observe his card after gameplay and observed that he had a jack, the posterior probability would be $\frac{5}{14} = 0.357$. 

An alternative ``na\"{i}ve'' (and incorrect) approach would be to increment the counter for $\alpha_{Jb}$ by $\frac{\alpha_{Jb}}{\alpha_{Jb}+\alpha_{Kb}}$, the ratio of the prior probability that he bets big given J to the total prior probability that he bets big. This gives a posterior probability of him betting big with J of $\frac{4+\frac{4}{13}}{14} = 0.308,$ which differs significantly from the correct value. It turns out that this approach is actually equivalent to just using the prior:
$$\frac{x + \frac{x}{x+y}}{x+y+1} \cdot \frac{x+y}{x+y} = \frac{x(x+y) + x}{(x+y+1)(x+y)}
= \frac{x(x+y+1)}{(x+y+1)(x+y)} = \frac{x}{x+y}$$ 
\normalsize

\subsection{Algorithm for general setting}
\label{se:general-setting}
We now consider the general setting where the opponent can have $n$ different states of private information according to an arbitrary distribution $\pi$ and can take $m$ different actions. Assume he is given private information $x_i$ with probability $\pi_i$, for $i = 1, \ldots, n$, and can take action $k_i$, for $i = 1, \ldots, m$. Assume the prior is Dirichlet with parameters $\alpha_{ij}$ for the number of times action $j$ was played with private information $i$ (so the mean of the prior has the player selecting action $k_j$ at state $x_i$ with probability $\frac{\alpha_{ij}}{\sum_j \alpha_{ij}}$). 

\small
$$P(C=x_i) = \pi_i; $$ 
$$P(A=k_j|C=x_i) = q_{k_j|x_i};$$ 
$$q_{k_j|x_i} \sim f_{\mbox{Dir}}(q_{k_j|x_i};\alpha_{i1},\ldots,a_{im}) = \frac{\prod_j q^{\alpha_{ij}-1}_{{k_j}|x_i}}
{B(\alpha_{i1},\ldots,\alpha_{im})}$$
\normalsize

As before, using Bayes' rule we have

\begin{eqnarray*}
p(q|O) &= &\frac{P(O|q)p(q)}{P(O)} \\
&= &\frac{\sum_i P(O,C=x_i|q)p(q)}{p(O)} \\ 
&= &\frac{\sum_i P(O|C=x_i,q) \pi_i p(q)}{p(O)} \\ 
&= &\frac{p(q) \sum_i P(O|x_i,q) \pi_i}{p(O)} \\
&= &\frac{\sum_i P(O|x_i) p(q) \pi_i }{p(O)}
\end{eqnarray*}

Now assume that action $k_{j^*}$ was observed in a new time step, while the opponent's private information was not observed. 

$$P(q|O) = \frac{\sum_{i=1}^n \left[\pi_i q_{k_{j^*}|x_i} \prod_{h=1}^m \prod_{j=1}^n q_{k_h|x_j}^{\alpha_{jh}-1} \right]}
{p(O)\prod_{i=1}^n B(\alpha_{i1},\ldots,\alpha_{im})}$$

We now compute the expectation for the posterior probability that the opponent plays $k_{j^*}$ with private information $x_{i^*}$ as done in Section~\ref{se:motivating-game}.

\begin{eqnarray*}
P(A= k_{j^*} | O, C = x_{i^*}) &= &
\frac{\int \left[ q_{k_j^*|x_i^*} \sum_{i=1}^n \left[\pi_i q_{k_{j^*}|x_i} \prod_{h=1}^m \prod_{j=1}^n q_{k_h|x_j}^{\alpha_{jh}-1}\right] \right]}{p(O)\prod_{i=1}^n B(\alpha_{i1},\ldots,\alpha_{im})} \\
&= &\frac{\sum_i \left[ \pi_i \prod_j B(\gamma_{1j},\ldots,\gamma_{nj}) \right]} {Z},
\end{eqnarray*}
where $\gamma_{ij} = \alpha_{ij}+2$ if $i = i^*$ and $j = j^*$, $\gamma_{ij} = \alpha_{ij}+1$ if  $j = j^*$ and $i \neq i^*$, and $\gamma_{ij} = \alpha_{ij}$ otherwise. If we denote the numerator by $\tau_{i^*j^*}$ then $Z = \sum_{i^*} \tau_{i^*j^*}.$ Notice that the product is over $n$ terms, and therefore the total number of terms will be exponential in $n$ (it is O($m \cdot 2^n$)).

For the case of multiple observed actions, the posterior is not Dirichlet and cannot be used directly as the prior for the next iteration. Suppose we have observed action $k_j$ for $\theta_j$ times (in addition to the number of fictitious times indicated by the prior counts $\alpha_{ij}$). We compute $P(q|O)$ analogously as

$$P(q|O) = \frac{\sum_{i=1}^n \left[\pi_i \sum_{\{\rho_{ab}\}} \prod_{h=1}^m \prod_{j=1}^n q_{k_h|x_j}^{\alpha_{jh}-1+\rho_{jh}} \right]}
{p(O)\prod_{i=1}^n B(\alpha_{i1},\ldots,\alpha_{im})},$$
where the $\sum_{\{\rho_{ab}\}}$ is over all values $0 \leq \rho_{ab} \leq \theta_b$ with $\sum_a \rho_{ab} = \theta_b$ for each $b$, for $1 \leq a \leq n$, $1 \leq b \leq m$. We can write this as 
$$\sum_{\{\rho_{ab}\}} = \sum_{\rho_{1b} = 0}^{\theta_b} \sum_{\rho_{2b} = 0}^{\theta_b - \rho_{1b}} \ldots  
\sum_{\rho_{n-1,b} = 0}^{\theta_b - \sum_{r = 0}^{n-2} \rho_{rb}}
\sum_{\rho_{nb} = \theta_b - \sum_{r = 0}^{n-2} \rho_{rb}}^{\theta_b - \sum_{r = 0}^{n-1} \rho_{rb}}.$$
\normalsize



Here, the expression for the full posterior distribution $P(q|O)$ is 
\begin{equation*}
\label{eq:general}
P(q|O) = 
\frac{\sum_i \left[ \pi_i \sum_{\{\rho_{ab}\}} \prod_h B(\alpha_{1h}+\rho_{1h},\ldots,\alpha_{nh}+\rho_{nh})\right]} {Z}
\end{equation*}
(Note that we could marginalize as before to compute the mean of the posterior strategy for arbitrary indices $\hat{i}, \hat{j}$, but we omit these details to avoid making the formula unnecessarily complicated.)

For each $b$, the number of terms equals the number of ways of distributing the $\theta_b$ observations amongst the $n$ possible private information states, which equals 
$$C(\theta_b + n -1, n-1) = \frac{(\theta_b +n-1)!}{(n-1)!\theta_b!} $$
\normalsize
Hence the total number of terms in the summation is upper bounded by 
$O\left( \frac{(T +n)!}{n!T!} \right)$, where $T$ is the total number of game iterations which upper bounds the $\theta_b$'s. Since we must do this for each of the $m$ actions, the total number of terms is $O\left( \left(\frac{(T +n)!}{n!T!} \right)^m \right).$ So the number of terms is exponential in the number of private information states and actions, but polynomial in the number of iterations.

In the following section, we will look at how to compute the product of beta functions for each term by approximating the product with an exponential of the sum of terms, improving the computation complexity for both the single and multiple observation settings.



\subsection{Calculation for product of beta distributions}
\label{se:beta}
From the previous section, we know that the posterior probability that the opponent plays $k_{j^*}$ with private information $x_{i^*}$ is
\begin{eqnarray*}
P(A= k_{j^*} | O, C = x_{i^*}) 
&= &\frac{\sum_i \left[ \pi_i \prod_j B(\gamma_{1j},\ldots,\gamma_{nj}) \right]} {Z},
\end{eqnarray*}
One computational bottleneck for computing the posterior probability is to compute the product of beta distributions.

Now we study the product of beta distributions $\prod_j B(\gamma_{1j},\ldots,\gamma_{nj})$.  We will derive an analytical formula so that evaluation of this formula is computationally feasible.

We prove the following theorem:
\begin{theorem}
Define $\gamma_{j}=\sum_{i=1}^n \gamma_{ij}$, then define the empirical probability distribution $\hat P_j(i)= \frac{\gamma_{ij}}{\sum_{i=1}^n \gamma_{ij}} = \frac{\gamma_{ij}}{\gamma_{j}} $. Define the Gamma function $\Gamma(x)= \int_0^\infty x^{z-1} e^{-x}\, \mathrm{d}x$, for integer $x$, $\Gamma(x)= (x-1)!$. Now define the entropy of $\hat P_i$ as $E(\hat P_j)= -\sum_{i=1}^n \hat P_j(i) \ln\hat P_j(i).$ Then we have
\begin{eqnarray*}
 \prod_{j=1}^m B(\gamma_{1j},\ldots,\gamma_{nj}) = 
 \exp \left( \sum_{j=1}^m \left(-\gamma_{j}E(\hat P_j)-\frac{1}{2}(n-1)\ln(\gamma_{j})+\sum_{i=1}^n \ln(P_j(i))+d \right) \right).
\end{eqnarray*}
Here $d$ is a constant such that $\tfrac12\ln(2\pi) n-1\leq d\leq n-\tfrac12\ln(2\pi)$, where $\ln(2\pi) \approx 0.92$.
\end{theorem}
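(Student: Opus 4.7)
The proof plan is to reduce the product of multivariate beta functions to a ratio of gamma functions and then apply Stirling's approximation to each gamma factor, finally reorganizing the resulting expression using the definition of the empirical distribution $\hat P_j$ and its entropy. Concretely, I would start from the identity
$$B(\gamma_{1j},\ldots,\gamma_{nj}) = \frac{\prod_{i=1}^n \Gamma(\gamma_{ij})}{\Gamma(\gamma_j)},$$
so that after taking the logarithm of the whole product the statement reduces to matching
$$\sum_{j=1}^m\left[\sum_{i=1}^n \ln\Gamma(\gamma_{ij}) - \ln\Gamma(\gamma_j)\right]$$
with the expression inside the outer $\exp$, up to the residual additive constant $d$ with the stated bounds.

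Next, I would invoke Stirling's inequality in the refined form
$$\ln\Gamma(x) = \bigl(x - \tfrac12\bigr)\ln x - x + \tfrac12\ln(2\pi) + \varepsilon(x), \qquad 0 \leq \varepsilon(x) \leq \tfrac{1}{12x},$$
applied both to each $\Gamma(\gamma_{ij})$ in the numerator and to $\Gamma(\gamma_j)$ in the denominator. Substituting the factorization $\gamma_{ij} = \gamma_j\,\hat P_j(i)$ splits $\ln\gamma_{ij}$ as $\ln\gamma_j + \ln\hat P_j(i)$. The linear-in-$x$ contributions cancel because $\sum_i \gamma_{ij} = \gamma_j$; the coefficient of $\ln\gamma_j$ telescopes against the $(\gamma_j-\tfrac12)\ln\gamma_j$ from $\ln\Gamma(\gamma_j)$ to leave exactly $-\tfrac{n-1}{2}\ln\gamma_j$; and the cross-term $\sum_i \gamma_{ij}\ln\hat P_j(i)$ is, by the definition of entropy, exactly $-\gamma_j E(\hat P_j)$. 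The remaining pieces are the clean $\tfrac{n-1}{2}\ln(2\pi)$ term, the half-index residue of the form $\sum_i \ln\hat P_j(i)$ coming from the $(x-\tfrac12)$ factor in Stirling, and the Stirling remainders $\sum_i \varepsilon(\gamma_{ij}) - \varepsilon(\gamma_j)$.

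The main obstacle is the bookkeeping in the final step: verifying that every leftover contribution either matches a named term in the claimed formula or can be absorbed into the per-$j$ constant $d$ without introducing any residual dependence on the $\gamma_{ij}$'s, and then confirming that the total additive constant lies in the interval $[\tfrac12\ln(2\pi)\,n - 1,\ n - \tfrac12\ln(2\pi)]$. The upper bound follows from applying Stirling's upper inequality to the numerator factors together with $\varepsilon(\gamma_j) \geq 0$ in the denominator, while the lower bound comes from the reverse direction; the uniform bound $\varepsilon(x) \leq \tfrac{1}{12}$ on each of the $n$ remainder pieces fits comfortably within the loose width of the claimed interval, which is why the bounds on $d$ are stated so generously. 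Along the way I would flag that once the entropy and $\ln\gamma_j$ structure is extracted, all remaining terms are either constants depending only on $n$ or quantities controlled by Stirling, so the residual can safely be collected into the per-$j$ constant $d$.
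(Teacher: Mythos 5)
Your plan follows the paper's proof essentially step for step --- the same reduction of each multivariate beta function to $\prod_{i}\Gamma(\gamma_{ij})/\Gamma(\gamma_j)$, the same two-sided Stirling bounds applied to every gamma factor, and the same reorganization via $\gamma_{ij}=\gamma_j\hat P_j(i)$ into the entropy term, the $-\tfrac{n-1}{2}\ln\gamma_j$ term, and a per-$j$ additive constant $d$ --- with the only cosmetic difference being that you quote Stirling in the Robbins form with remainder $\varepsilon(x)\in[0,\tfrac{1}{12x}]$ where the paper uses the equivalent bracket $c(z)\in[\tfrac12\ln(2\pi),1]$. One small point in your favor: carrying out your bookkeeping of the $(x-\tfrac12)\ln x$ pieces yields the coefficient $-\tfrac12$ on $\sum_i\ln\hat P_j(i)$, which is what the algebra actually produces, whereas the theorem as printed carries that term with coefficient $+1$.
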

\begin{proof}
We know from the definition of beta function that 
\begin{eqnarray*}
B(\gamma_{1j},\ldots,\gamma_{nj}) 
 = 
\frac{\prod_{i=1}^n \Gamma(\gamma_{ij})}{\Gamma \left(\sum_{i=1}^n \gamma_{ij} \right)}
\end{eqnarray*}
Let's use a version of Stirling's formula with bounds valid for all positive integers $z$, 
\begin{eqnarray*}
 \sqrt{2\pi z}\left(\frac{z}{e}\right)^z\leq \Gamma(z+1)= z!\leq   e\sqrt{z}\left(\frac{z}{e}\right)^z.
\end{eqnarray*}
Therefore, $\Gamma(z+1)=C(z) \sqrt{z}\left(\frac{z}{e}\right)^z$ and $C(z)$ is between $ \sqrt{2\pi}=2.5066$ and $e=2.71828$.
And since we want to express the product of Gamma function, we consider
\begin{eqnarray*}
\ln\Gamma(z)=  z\ln z - z - \tfrac12\ln(z)+c(z)
\end{eqnarray*}
where $c(z)$ is between $\tfrac12\ln(2\pi)\approx 0.92$ and $1$.

Now we can look at the product of beta functions,
\begin{eqnarray*}
 \prod_{j=1}^m B(\gamma_{1j},\ldots,\gamma_{nj}) 
 &=& 
 \prod_{j=1}^m \left(\frac{\prod_{i=1}^n \Gamma(\gamma_{ij})}{\Gamma \left(\sum_{i=1}^n \gamma_{ij} \right)} \right) 
 \\&=& 
 \exp \left(\sum_{j=1}^m \left(\sum_{i=1}^n\ln\Gamma(\gamma_{ij}) -\ln \Gamma \left(\sum_{i=1}^n \gamma_{ij} \right)\right)\right) 
\end{eqnarray*}
Now we want to use Stirling's formula with bounds and the definition of entropy to reduce the terms in the exponential.  

\small
\begin{eqnarray*}
& & \sum_{j=1}^m \left( \sum_{i=1}^n\ln\Gamma(\gamma_{ij}) -\ln \left( \Gamma \left(\sum_{i=1}^n \gamma_{ij} \right) \right) \right)\\
  &=&
\sum_{j=1}^m \left( \sum_{i=1}^n \left(\gamma_{ij}\ln \gamma_{ij} - \gamma_{ij} - \frac{\ln(\gamma_{ij})}{2}+c(\gamma_{ij}) \right) -
\left( \sum_{i=1}^n\gamma_{ij}\ln \sum_{i=1}^n\gamma_{ij} - \sum_{i=1}^n\gamma_{ij} - \frac{\ln \left(\sum_i \gamma_{ij} \right)}{2} +c \left(\sum_{i=1}^n\gamma_{ij} \right) \right)\right)
\\ &=& 
\sum_{j=1}^m \left( \sum_{i=1}^n\gamma_{ij}\sum_{i=1}^n \frac{\gamma_{ij}}{\sum_{i=1}^n \gamma_{ij}}
\ln \left(\frac{\gamma_{ij}}{\sum_i \gamma_{ij}} \right)-\frac{n-1}{2}\ln \left( \sum_{i=1}^n \gamma_{ij}\right)+\sum_{i=1}^n \ln \left(\sum_{i=1}^n \frac{\gamma_{ij}}{\sum_i \gamma_{ij}} \right)+ \sum_{i=1}^n c(\gamma_{ij})-c\sum_{i=1}^n\gamma_{ij}\right)
\\ &=& 
\sum_{j=1}^m \left(  \left(-\gamma_{j}E(\hat P_j) \right) -\frac{1}{2}(n-1)\ln(\gamma_{j})+\sum_{i=1}^n \ln(P_j(i))+
\sum_{i=1}^n c(\gamma_{ij})-c \left(\sum_{i=1}^n\gamma_{ij}\right)\right)
 \\ &=& 
\sum_{j=1}^m \left(  \left(-\gamma_{j}E(\hat P_j) \right)-\frac{1}{2}(n-1)\ln(\gamma_{j})+\sum_{i=1}^n \ln(P_j(i))+d\right).
\end{eqnarray*}
\normalsize
Here $n\tfrac12\ln(2\pi)-1\leq d\leq n-\tfrac12\ln(2\pi)$.

The first equation is Stirling's formula with bounds.
The second expands and reorganizes the terms.
The third equation applies the definition $\gamma_{j}=\sum_{i=1}^n \gamma_{ij}$, the definition of empirical probability distribution 
$\hat P_j(i)= \frac{\gamma_{ij}}{\sum_{i=1}^n \gamma_{ij}}, $
and the definition of entropy for an empirical probability distribution 
$$E(\hat P_j)= -\sum_{i=1}^n \hat P_j(i) \ln(\hat P_j(i)) = -\sum_{i=1}^n \frac{\gamma_{ij}}{\sum_{i=1}^n \gamma_{ij}} 
\ln \left( \frac{\gamma_{ij}}{\sum_{i=1}^n \gamma_{ij}} \right).$$
The fourth equation uses the bound of $c(\gamma_{ij})$ between $\tfrac12\ln(2\pi)$ and $1$ on each constant, to get $\tfrac12\ln(2\pi) n -1\leq d\leq n-\tfrac12\ln(2\pi)$.

Therefore, we have  
\begin{eqnarray*}
& & \prod_{j=1}^m B(\gamma_{1j},\ldots,\gamma_{nj}) 
\\&=& \exp \left( \sum_{j=1}^m \left( \sum_{i=1}^n\ln\Gamma(\gamma_{ij}) -\ln \Gamma ( \sum_{i=1}^n \gamma_{ij}  ) \right) \right) 
\\&=& \exp\left(\sum_{j=1}^m \left(\left(-\gamma_{j}E(\hat P_j) \right)-\frac{1}{2}(n-1)\ln(\gamma_{j})+\sum_{i=1}^n \ln(P_j(i))+d \right) \right)
\end{eqnarray*}
\end{proof}

We remark that the approximation is actually pretty tight numerically, since in Stirling's formula, we get a constant $c(z)$ between $\tfrac12\ln(2\pi)\approx 0.92$ and $1$. And in numerical computation we can just take $0.92$. It would be especially accurate when these $\gamma_{ij}$ are large, since the lower bound $\tfrac12\ln(2\pi)\approx 0.92$ is accurate in the asymptotic version of Stirling's formula.

Using this formula, we can evaluate $P(A= k_{j^*} | O, C = x_{i^*})$ efficiently. For a single observation,
\begin{eqnarray*}
P(A= k_{j^*} | O, C = x_{i^*}) 
&= &\frac{\sum_i \left[ \pi_i \prod_j B(\gamma_{1j},\ldots,\gamma_{nj}) \right]} {Z},\\
&= &\frac{\sum_i \left[ \pi_i \exp\left(\sum_{j=1}^m \left(-\gamma_{j}E(\hat P_j)-\frac{1}{2}(n-1)\ln(\gamma_{j})+\sum_{i=1}^n \ln(P_j(i))+d \right)\right) \right]} {Z},
\end{eqnarray*}
where $\gamma_{ij} = \alpha_{ij}+2$ if $i = i^*$ and $j = j^*$, $\gamma_{ij} = \alpha_{ij}+1$ if  $j = j^*$ and $i \neq i^*$, and $\gamma_{ij} = \alpha_{ij}$ otherwise. And the constant $d$ is chosen by $\tfrac12\ln(2\pi) n -1\leq d\leq n-\tfrac12\ln(2\pi)$. If we denote the numerator by $\tau_{i^*j^*}$ then $Z = \sum_{i^*} \tau_{i^*j^*}.$

The complexity for computing this posterior is $n$ times the complexity of computing the terms inside the exponential. The complexity of that is $mn$ since the computation for entropy is $n$, and we sum over $m$ in the outside.
Therefore, the complexity for computing this posterior is $n^2 m$.

Similarly, in the multiple observation case,  
\small
\begin{eqnarray*}
P(q | O) 
&=&\frac{\sum_i \left[ \pi_i \sum_{\{\rho\}} \prod_h B(\gamma_{1j},\ldots,\gamma_{nj})\right]} {Z}\\
&=& \frac{\sum_i \left[ \pi_i \sum_{\{\rho\}} 
\exp\left(\sum_{j=1}^m \left(-\gamma_{j}E(\hat P_j)-\frac{1}{2}(n-1)\ln(\gamma_{j})+\sum_{i=1}^n \ln(P_j(i))+d \right) \right)
\right]} {Z}, 
\end{eqnarray*}
\normalsize
where $\gamma_{ij} = \alpha_{ij}+\rho_{ij}.$
 And $\sum_{\{\rho\}}$ is over all values $0 \leq \rho_{ab} \leq \theta_b$ with $\sum_a \rho_{ab} = \theta_b$ for each $b$, for $1 \leq a \leq n$, $1 \leq b \leq m$, as 
$$\sum_{\{\rho\}} = \sum_{\rho_{1b} = 0}^{\theta_b} \sum_{\rho_{2b} = 0}^{\theta_b - \rho_{1b}} \ldots \sum_{\rho_{nb} = 0}^{\theta_b - \sum_{r = 0}^{n-1} \rho_{rb}}.$$

The complexity for computing this posterior is  $n^2 m \cdot C(\theta_b + n -1, n-1)$, as the product of the complexity in the previous case $n^2 m$ and the number of ways to write $\theta_b$ as finite sums of observations of at most $n$ buckets, which is $C(\theta_b + n -1, n-1)$. If the upper bound on $\theta_b$ is $T$, then as we discussed above, the total number of terms in the summation is upper bounded by $O\left( \frac{(T +n)!}{n!T!} \right)$, where $T$ is the total number of game iterations which upper bounds the $\theta_b$'s. And the  complexity of computing this posterior is  $ O\left( \frac{(T +n)!}{n!T!} n^2 m \right).$ Thus, overall this approach for computing products of the beta function leads to an exponential improvement in the running time of the algorithm for one observation, and reduces the dependence on $m$ for the multiple observation setting from exponential to linear, though the complexity still remains exponential in $n$ and $T$ for the latter.

\section{Algorithm for uniform prior distribution}
\label{se:uniform}
Another natural prior that has been studied previously is the uniform distribution over a polyhedron. This can model the situation when we think the opponent is playing uniformly at random within some region of a fixed strategy, such as a specific Nash equilibrium or a ``population mean'' strategy based on historical data. Prior work has used this model to generate a class of opponents who are significantly more sophisticated than opponents who play uniformly at random over the entire space~\cite{Ganzfried15:Safe}). For example, in rock-paper-scissors, we may think the opponent is playing a strategy uniformly at random out of strategies that play each action with probability within [0.31,0.35], as opposed to completely random over [0,1]. 

Let $v_{i,j}$ denote the $j$th vertex for player $i$, where vertices correspond to mixed strategies. Let $p^0$ denote the prior distribution over vertices, where $p^0_{i,j}$ is the probability that player $i$ plays the strategy corresponding to vertex $v_{i,j}$. Let $V_i$ denote the number of vertices for player $i$. Algorithm~\ref{al:uniform} computes the Bayesian best response in this setting. Correctness follows straightforwardly by applying Corollary~\ref{co:posterior-mean} with the formula for the mean of the uniform distribution.

\begin{algorithm}[!ht]
\caption{Algorithm for opponent exploitation with uniform prior distribution over polyhedron \normalsize}
\label{al:uniform} 
\textbf{Inputs}: Prior distribution over vertices $p^0$, response functions $r_t$ for $0 \leq t \leq T$
\begin{algorithmic}
\State $M_0 \gets$ strategy profile assuming opponent $i$ plays each vertex $v_{i,j}$ with probability $ p^0_{i,j} = \frac{1}{V_i}$ 
\State $R_0 \gets r_0(M_0)$
\State Play according to $R_0$
\For {$t = 1$ to $T$}
\For {$i = 1$ to $N$}
\State $a_i \gets $ action taken by player $i$ at time step $t$
\For {$j = 1$ to $V_i$}
\State $p^t_{i,j} \gets p^{t-1}_{i,j} \cdot v_{i,j}(a_i)$ 
\EndFor
\State Normalize the $p^t_{i,j}$'s so they sum to 1
\EndFor
\State $M_t \gets$ strategy profile assuming opponent $i$ plays each vertex $v_{i,j}$ with probability $p^t_{i,j}$ 
\State $R_t \gets r_t(M_t)$
\State Play according to $R_t$
\EndFor
\end{algorithmic}
\end{algorithm}
\normalsize

\section{Experiments}
\label{se:exp}
We ran experiments on the game described in Section~\ref{se:motivating-game}. For the beta function computations we used the Colt Java math library.\footnote{\url{https://dst.lbl.gov/ACSSoftware/colt/}} 
For our first set of experiments we tested our basic algorithm which assumes that we observe a single opponent action (Equation~\ref{eq:poker-posterior}). We varied the Dirichlet prior parameters to be uniform in \{1,n\} to explore the runtime as a function of the size of the prior (since computing larger values of the eta function can be challenging). The results (Table~\ref{ta:basic}) show that the computation is very fast even for large $n$, with running time under 8 microseconds for $n = 500$. However, we also observe frequent numerical instability for large $n$. The second row shows the percentage of the trials for which the algorithm produced a result of ``NaN'' (which typically results from dividing zero by zero). This jumps from 0\% for $n = 50$ to 8.8\% for $n = 100$ to 86.9\% for $n = 200$. This is due to instability of algorithms for computing the beta function. We used to our knowledge the best publicly available beta function solver, but perhaps there could be a different solver that leads to better performance in our setting (e.g., it trades off runtime for additional precision). In any case, despite the cases of instability, the results indicate that the algorithm runs extremely fast for hundreds of prior observations, and since it is exact, it is the best algorithm for the settings in which it produces a valid output. Note that $n = 100$ corresponds to 400 prior observations on average since there are four parameters, and that the experiments in previous work used a horizon of 200 hands played in a match against an opponent~\cite{Southey05:Bayes}.

\begin{table}[!ht]
\centering
\begin{tabular}{|*{7}{c|}} \hline
$n$ &10 &20 &50 &100 &200 &500\\ \hline
Time &0.0005 &0.0008 &0.0018 &0.0025 &0.0034 &0.0076\\ \hline
NaN &0 &0 &0 &0.0883 &0.8694 &0.9966\\ \hline
\end{tabular}
\caption{Results of modifying Dirichlet parameters to be U\{1,n\} over one million samples. First row is average runtime in milliseconds. Second row is percentage of the trials that output ``NaN.'' \normalsize}
\label{ta:basic}
\end{table}
\normalsize

We tested our generalized algorithm from Equation~\ref{eq:poker-general} for different numbers of observations, keeping the prior fixed. We used a Dirichlet prior with all parameters equal to 2 as has been done in prior work~\cite{Southey05:Bayes}. We observe (Table~\ref{ta:multi}) that the algorithm runs quickly for large numbers of observations, though again it runs into numerical instability for large values. As one example, the algorithm runs in 19 milliseconds for $\theta_b = 101$, $\theta_s = 100$. 

\begin{table}[!ht]
\centering
\begin{tabular}{|*{8}{c|}} \hline
$n$ &10 &20 &50 &100 &200 &500 &1000\\ \hline
Time &0.015 &0.03 &0.36 &2.101 &10.306 &128.165 &728.383\\ \hline
NaN &0 &0 &0 &0 &0.290 &0.880 &0.971\\ \hline
\end{tabular}
\caption{Results using Dirichlet prior with all parameters equal to 2 and $\theta_b$, $\theta_s$ in U\{1,n\} averaged over 1,000 samples. First row is average runtime in milliseconds, second row is percentage of trials that produced ``NaN.'' 
\normalsize}
\label{ta:multi}
\end{table}
\normalsize

We compared our algorithm against the three heuristics described in previous work~\cite{Southey05:Bayes}. The first heuristic Bayesian Best Response (BBR) approximates the opponent's strategy by sampling strategies according to the prior and computing the mean of the posterior over these samples, then best-responding to this mean strategy. Their Max A Posteriori Response heuristic (MAP) samples strategies from the prior, computes the posterior value for these strategies, and plays a best response to the one with highest posterior value. Thompson's Response samples strategies from the prior, computes the posterior values, then samples one strategy for the opponent from these posteriors and plays a best response to it. For all approaches we used a Dirichlet prior with the standard values of 2 for all parameters. For all the sampling approaches we sampled 1,000 strategies from the prior for each opponent and used these strategies for all hands against that opponent (as was done in prior work~\cite{Southey05:Bayes}). Note that one can draw samples $x_1,\ldots,x_K$ from a Dirichlet distribution by first drawing $K$ independent samples $y_1,\ldots,y_K$ from Gamma distributions each with density 
$\mbox{Gamma}(\alpha_i,1) = \frac{y^{\alpha_i - 1}_i e^{-y_i}}{\Gamma(\alpha_i)}$
and then setting 
$x_i = \frac{y_i}{\sum_j y_j}.$\footnote{\url{https://en.wikipedia.org/wiki/Dirichlet_distribution#Random_number_generation}}

We also compared against the payoff of a full best response strategy that knows the actual mixed strategy of the opponent, not just a distribution over his strategies, as well as the Nash equilibrium strategy.\footnote{Note that the Nash equilibrium for player 2 is to call a big bet with probability $\frac{1}{4}$ and a small bet with probability 1 (the equilibrium for player 1 is to always bet big with a king and to bet big with probability $\frac{5}{6}$ with a jack).} Note that the game has a value to us (player 2) of -0.75, so negative values are not necessarily indicative of ``losing.'' 

Table~\ref{ta:comparison} shows that our exact Bayesian best response algorithm (EBBR) outperforms the heuristic approaches, as expected since it is optimal when the opponent's strategy is drawn from the prior (though performance is very similar to BBR and not statistically distinguishable until 25 iterations). BBR performed best out of the sampling approaches, which is not surprising because it is trying to approximate the optimal approach while the others are optimizing a different objective. All of the sampling approaches outperformed just following the Nash equilibrium, and as expected all exploitation approaches performed worse than playing a best response to the opponent's actual strategy. Note that, against an opponent drawn from a Dirichlet distribution with all parameters equal to 2 and no further observations of his play, our best response would be to always call, which gives us expected payoff of zero. Thus for the initial column the actual value for EBBR when averaged over all opponents would be zero. Against this distribution the Nash equilibrium has expected payoff $-0.375$. 

\begin{table}[!ht]
\centering
\begin{tabular}{|*{4}{c|}} \hline
Algorithm & Initial & 10 & 25\\ \hline
\textbf{EBBR} & \boldmath$-0.00003 \pm 0.0003$ &\boldmath$-0.0004 \pm 0.0009$ &\boldmath$0.0002 \pm 0.0008$\\ \hline
BBR & $-0.00003 \pm 0.0003$ &$-0.0004 \pm 0.0009$ &$-0.0065 \pm 0.0008$\\ \hline
MAP & $-0.1649 \pm 0.0002$ &$-0.2025 \pm 0.0007$ &$-0.2664 \pm 0.0007$\\ \hline
Thompson &$-0.2098 \pm 0.0002$ &$-0.2224 \pm 0.0007$ &$-0.2996 \pm 0.0007$\\ \hline
FullBR &$0.4975 \pm 0.0002$ &$0.4971 \pm 0.0006$ &$0.4978 \pm 0.0005$\\ \hline
Nash &$-0.3750 \pm 0.0000$ &$-0.3749 \pm 0.0001$ &$-0.3751 \pm 0.0001$\\ \hline
\end{tabular}
\caption{Comparison with algorithms from prior work, full best response, and Nash equilibrium using Dirichlet prior with parameters equal to 2. Sampling algorithms use 1000 samples. For initial column we sampled 100 million opponents from the prior, for 10 rounds we sampled one million, and for 25 rounds 500,000. Results are average winrate per hand over all opponents with 95\% confidence intervals.}
\label{ta:comparison}
\end{table}

On the positive side the exploitation approaches (particularly EBBR and BBR) are able to significantly outperform the Nash equilibrium strategy when given access to a reliable prior distribution; however, none of them are able to improve over time as a result of additional observations (EBBR and BBR perform around the same with more observations while Thompson and MAP perform noticeably worse). This indicates that, for this setting at least, just observing the opponent's public action and not private information is not additionally useful in comparison to the performance variance and the noise introduced from sampling. In order to successfully learn beyond the prior in imperfect-information settings, algorithms will need access to some of the opponents' private information. Previous experiments had also shown that when the sampling approaches are played against opponents drawn from the prior, the winning rates converge, typically very quickly (even with access to the opponent's private information in certain hands that went to \emph{showdown}): ``The independent Dirichlet prior is very broad, admitting a wide variety of opponents. It is encouraging that the Bayesian approach is able to exploit even this weak information to achieve a better result.''~\cite{Southey05:Bayes} 

We also tested the effect of using only 10 samples of the opponent's strategy for the sampling approaches. The approaches would then have a noisier estimate of the opponent's strategy and should achieve lower performance against the actual strategy, though run significantly faster.

\begin{table}[!ht]
\centering
\begin{tabular}{|*{5}{c|}} \hline
Alg & Initial & 10 & 25 &100\\ \hline
\textbf{EBBR} & \boldmath$.0001 \pm .0003$ &\boldmath-$.0003 \pm .0003$ &\boldmath$.0002 \pm .0002$ &\boldmath-$.0014 \pm .0005$\\ \hline
BBR &-$.0662 \pm .0003$ &-$.0902 \pm .0003$ &-$.1634 \pm .0002$ &-$.3127 \pm .0004$\\ \hline
MAP &-$.1699 \pm .0002$ &-$.2060 \pm .0002$ &-$.2657 \pm .0001$ &-$.3082 \pm .0004$\\ \hline
Thompson &-$.2118 \pm .0002$ &-$.2247 \pm .0002$ &-$.2844 \pm .0001$ &-$.3725 \pm .0004$\\ \hline
FullBR &$.4976 \pm .0002$ &$.4973 \pm .0002$ &$.4975 \pm .0001$ &$.4969 \pm .0003$\\ \hline
Nash &-$.3750 \pm .0000$ &-$.3750 \pm .0000$ &-$.3750 \pm .0000$ &-$.3750 \pm .0001$\\ \hline
\end{tabular}
\caption{Comparison of our algorithm with algorithms from prior work (BBR, MAP, Thompson), full best response, and Nash equilibrium using Dirichlet prior with parameters equal to 2. The sampling algorithms each use 10 samples from the opponent's strategy (as opposed to 1000 samples from our earlier analysis). For the initial column we sampled 100 million opponents from the prior, for 10 and 25 rounds we sampled ten million, and 300,000 for 100 rounds.}  
\label{ta:comparison-10samples}
\end{table}
\normalsize

Thompson and MAP performed very similarly using 10 vs. 1000 samples (these approaches essentially end up selecting a single strategy from the set of samples to be used as the model, and the results indicate that they are relatively insensitive to the number of samples used), but BBR performs significantly worse. 
While the performance between EBBR and BBR was statistically indistinguishable for 1000 samples, EBBR significantly outperforms BBR with 10 samples, particularly for more iterations.
As before the sampling approaches seem to actually perform worse over time as the noise propagates, while the performance of EBBR remains about the same. The dropoff of BBR is particularly significant. The results indicate that EBBR would be particularly preferable over the sampling approaches if the number of available samples is small (e.g., due to running time considerations) and as the number of game iterations increases (though eventually EBBR can run into numerical stability issues described earlier).

\section{Conclusion}
\label{se:conclusion}

One of the most fundamental problems in game theory is learning to play optimally against opponents who may make mistakes. We presented the first exact algorithm for performing exploitation in imperfect-information games in the Bayesian setting using the most well-studied prior distribution for this problem, the Dirichlet distribution. 
Previously an exact algorithm had only been presented for normal-form games, and the best previous algorithm was a heuristic with no guarantees. 
We demonstrated experimentally that our algorithm can be practical and that it outperforms the best prior approaches, though it can run into numerical stability issues for large numbers of observations. 

We presented a general meta-algorithm and new theoretical framework for studying opponent exploitation. Future work can extend our analysis to many important settings. For example, we would like to study the setting when the opponent's private information is only sometimes observed (we expect our approach can be extended easily to this setting) and general sequential games where the agents can take multiple actions (which we expect to be hard, as indicated by the analysis in the tech report). We would also like to extend analysis for any number of agents. Our algorithm is not specialized for two-player zero-sum games (it applies to general-sum games); if we are able to compute the mean of the posterior strategy against multiple opponent agents, then best responding to this strategy profile is just a single agent optimization and can be done in time linear in the size of the game regardless of the number of opponents. While the Dirichlet is the most natural prior for this problem, we would also like to study other important distributions. We presented an algorithm for the uniform prior distribution over a polyhedron, which could model the situation where we think the opponent is playing a strategy from a uniform distribution in a region around a particular strategy, such as a specific equilibrium or a ``population mean'' based on historical data.

Opponent exploitation is a fundamental problem, and our algorithm and extensions could be applicable to many domains that are modeled as an imperfect-information games. For example, many security game models have imperfect information, e.g.,~\cite{Letchford10:Computing,Kiekintveld10:Robust}, and opponent exploitation in security games has been a very active area of study, e.g.,~\cite{Pita10:Robust,Nguyen13:Analyzing}. It has also been proposed recently that opponent exploitation can be important in medical treatment~\cite{Sandholm15:Steering}.

\bibliographystyle{ACM-Reference-Format}
\bibliography{D://FromBackup/Research/refs/dairefs}


\begin{thebibliography}{17}


\ifx \showCODEN    \undefined \def \showCODEN     #1{\unskip}     \fi
\ifx \showDOI      \undefined \def \showDOI       #1{#1}\fi
\ifx \showISBNx    \undefined \def \showISBNx     #1{\unskip}     \fi
\ifx \showISBNxiii \undefined \def \showISBNxiii  #1{\unskip}     \fi
\ifx \showISSN     \undefined \def \showISSN      #1{\unskip}     \fi
\ifx \showLCCN     \undefined \def \showLCCN      #1{\unskip}     \fi
\ifx \shownote     \undefined \def \shownote      #1{#1}          \fi
\ifx \showarticletitle \undefined \def \showarticletitle #1{#1}   \fi
\ifx \showURL      \undefined \def \showURL       {\relax}        \fi
\providecommand\bibfield[2]{#2}
\providecommand\bibinfo[2]{#2}
\providecommand\natexlab[1]{#1}
\providecommand\showeprint[2][]{arXiv:#2}

\bibitem[\protect\citeauthoryear{Brown}{Brown}{1951}]%
        {Brown51:Iterative}
\bibfield{author}{\bibinfo{person}{George~W. Brown}.}
  \bibinfo{year}{1951}\natexlab{}.
\newblock \showarticletitle{Iterative Solutions of Games by Fictitious Play}.
\newblock In \bibinfo{booktitle}{\emph{Activity Analysis of Production and
  Allocation}}, \bibfield{editor}{\bibinfo{person}{Tjalling~C. Koopmans}}
  (Ed.). \bibinfo{publisher}{John Wiley \& Sons}, \bibinfo{pages}{374--376}.
\newblock


\bibitem[\protect\citeauthoryear{Chen and Deng}{Chen and Deng}{2006}]%
        {Chen06:Settling}
\bibfield{author}{\bibinfo{person}{Xi Chen} {and} \bibinfo{person}{Xiaotie
  Deng}.} \bibinfo{year}{2006}\natexlab{}.
\newblock \showarticletitle{Settling the Complexity of 2-Player {N}ash
  Equilibrium}. In \bibinfo{booktitle}{\emph{Proceedings of the Annual
  Symposium on Foundations of Computer Science (FOCS)}}.
\newblock


\bibitem[\protect\citeauthoryear{Fudenberg and Levine}{Fudenberg and
  Levine}{1998}]%
        {Fudenberg98:Theory}
\bibfield{author}{\bibinfo{person}{Drew Fudenberg} {and} \bibinfo{person}{David
  Levine}.} \bibinfo{year}{1998}\natexlab{}.
\newblock \bibinfo{booktitle}{\emph{The Theory of Learning in Games}}.
\newblock \bibinfo{publisher}{MIT Press}.
\newblock


\bibitem[\protect\citeauthoryear{Ganzfried and Sandholm}{Ganzfried and
  Sandholm}{2011}]%
        {Ganzfried11:Game}
\bibfield{author}{\bibinfo{person}{Sam Ganzfried} {and} \bibinfo{person}{Tuomas
  Sandholm}.} \bibinfo{year}{2011}\natexlab{}.
\newblock \showarticletitle{Game theory-based opponent modeling in large
  imperfect-information games}. In \bibinfo{booktitle}{\emph{Proceedings of the
  International Conference on Autonomous Agents and Multi-Agent Systems
  (AAMAS)}}.
\newblock


\bibitem[\protect\citeauthoryear{Ganzfried and Sandholm}{Ganzfried and
  Sandholm}{2015}]%
        {Ganzfried15:Safe}
\bibfield{author}{\bibinfo{person}{Sam Ganzfried} {and} \bibinfo{person}{Tuomas
  Sandholm}.} \bibinfo{year}{2015}\natexlab{}.
\newblock \showarticletitle{Safe Opponent Exploitation}.
\newblock \bibinfo{journal}{\emph{ACM Transactions on Economics and Computation
  (TEAC)}} (\bibinfo{year}{2015}).
\newblock
\newblock
\shownote{Special issue on selected papers from EC-12.}


\bibitem[\protect\citeauthoryear{Gibson}{Gibson}{2014}]%
        {Gibson14:Regret}
\bibfield{author}{\bibinfo{person}{Richard Gibson}.}
  \bibinfo{year}{2014}\natexlab{}.
\newblock \emph{\bibinfo{title}{Regret Minimization in Games and the
  Development of Champion Multiplayer Computer Poker-Playing Agents}}.
\newblock \bibinfo{thesistype}{Ph.D. Dissertation}. \bibinfo{school}{University
  of Alberta}.
\newblock


\bibitem[\protect\citeauthoryear{Johanson and Bowling}{Johanson and
  Bowling}{2009}]%
        {Johanson09:Data}
\bibfield{author}{\bibinfo{person}{Michael Johanson} {and}
  \bibinfo{person}{Michael Bowling}.} \bibinfo{year}{2009}\natexlab{}.
\newblock \showarticletitle{Data Biased Robust Counter Strategies}. In
  \bibinfo{booktitle}{\emph{International Conference on Artificial Intelligence
  and Statistics (AISTATS)}}.
\newblock


\bibitem[\protect\citeauthoryear{Johanson, Waugh, Bowling, and
  Zinkevich}{Johanson et~al\mbox{.}}{2011}]%
        {Johanson11:Accelerating}
\bibfield{author}{\bibinfo{person}{Michael Johanson}, \bibinfo{person}{Kevin
  Waugh}, \bibinfo{person}{Michael Bowling}, {and} \bibinfo{person}{Martin
  Zinkevich}.} \bibinfo{year}{2011}\natexlab{}.
\newblock \showarticletitle{Accelerating Best Response Calculation in Large
  Extensive Games}. In \bibinfo{booktitle}{\emph{Proceedings of the
  International Joint Conference on Artificial Intelligence (IJCAI)}}.
\newblock


\bibitem[\protect\citeauthoryear{Johanson, Zinkevich, and Bowling}{Johanson
  et~al\mbox{.}}{2007}]%
        {Johanson07:Computing}
\bibfield{author}{\bibinfo{person}{Michael Johanson}, \bibinfo{person}{Martin
  Zinkevich}, {and} \bibinfo{person}{Michael Bowling}.}
  \bibinfo{year}{2007}\natexlab{}.
\newblock \showarticletitle{Computing Robust Counter-Strategies}. In
  \bibinfo{booktitle}{\emph{Proceedings of the Annual Conference on Neural
  Information Processing Systems (NIPS)}}. \bibinfo{pages}{1128--1135}.
\newblock


\bibitem[\protect\citeauthoryear{Kiekintveld, Tambe, and Marecki}{Kiekintveld
  et~al\mbox{.}}{2010}]%
        {Kiekintveld10:Robust}
\bibfield{author}{\bibinfo{person}{Christopher Kiekintveld},
  \bibinfo{person}{Milind Tambe}, {and} \bibinfo{person}{Janusz Marecki}.}
  \bibinfo{year}{2010}\natexlab{}.
\newblock \showarticletitle{Robust {B}ayesian Methods for {S}tackelberg
  Security Games (Extended Abstract)}. In \bibinfo{booktitle}{\emph{Autonomous
  Agents and Multi-Agent Systems}}.
\newblock


\bibitem[\protect\citeauthoryear{Koller, Megiddo, and {von Stengel}}{Koller
  et~al\mbox{.}}{1994}]%
        {Koller94:Fast}
\bibfield{author}{\bibinfo{person}{Daphne Koller}, \bibinfo{person}{Nimrod
  Megiddo}, {and} \bibinfo{person}{Bernhard {von Stengel}}.}
  \bibinfo{year}{1994}\natexlab{}.
\newblock \showarticletitle{Fast algorithms for finding randomized strategies
  in game trees}. In \bibinfo{booktitle}{\emph{Proceedings of the 26th {ACM}
  {S}ymposium on {T}heory of {C}omputing ({STOC})}}. \bibinfo{pages}{750--760}.
\newblock


\bibitem[\protect\citeauthoryear{Letchford and Conitzer}{Letchford and
  Conitzer}{2010}]%
        {Letchford10:Computing}
\bibfield{author}{\bibinfo{person}{Joshua Letchford} {and}
  \bibinfo{person}{Vincent Conitzer}.} \bibinfo{year}{2010}\natexlab{}.
\newblock \showarticletitle{Computing Optimal Strategies to Commit to in
  Extensive-Form Games}. In \bibinfo{booktitle}{\emph{Proceedings of the ACM
  Conference on Electronic Commerce (EC)}}.
\newblock


\bibitem[\protect\citeauthoryear{Nguyen, Yang, Azaria, Kraus, and Tambe}{Nguyen
  et~al\mbox{.}}{2013}]%
        {Nguyen13:Analyzing}
\bibfield{author}{\bibinfo{person}{Thanh~H. Nguyen}, \bibinfo{person}{Rong
  Yang}, \bibinfo{person}{Amos Azaria}, \bibinfo{person}{Sarit Kraus}, {and}
  \bibinfo{person}{Milind Tambe}.} \bibinfo{year}{2013}\natexlab{}.
\newblock \showarticletitle{Analyzing the Effectiveness of Adversary Modeling
  in Security Games}. In \bibinfo{booktitle}{\emph{Proceedings of the AAAI
  Conference on Artificial Intelligence (AAAI)}}.
\newblock


\bibitem[\protect\citeauthoryear{Pita, Jain, Tambe, Ord{\'{o}}{\~{n}}ez, and
  Kraus}{Pita et~al\mbox{.}}{2010}]%
        {Pita10:Robust}
\bibfield{author}{\bibinfo{person}{James Pita}, \bibinfo{person}{Manish Jain},
  \bibinfo{person}{Milind Tambe}, \bibinfo{person}{Fernando
  Ord{\'{o}}{\~{n}}ez}, {and} \bibinfo{person}{Sarit Kraus}.}
  \bibinfo{year}{2010}\natexlab{}.
\newblock \showarticletitle{Robust Solutions to {S}tackelberg Games: Addressing
  Bounded Rationality and Limited Observations in Human Cognition}.
\newblock \bibinfo{journal}{\emph{Artificial Intelligence Journal}}
  \bibinfo{volume}{174}, \bibinfo{number}{15} (\bibinfo{year}{2010}),
  \bibinfo{pages}{1142--1171}.
\newblock


\bibitem[\protect\citeauthoryear{Sandholm}{Sandholm}{2015}]%
        {Sandholm15:Steering}
\bibfield{author}{\bibinfo{person}{Tuomas Sandholm}.}
  \bibinfo{year}{2015}\natexlab{}.
\newblock \showarticletitle{Steering Evolution Strategically: Computational
  Game Theory and Opponent Exploitation for Treatment Planning, Drug Design,
  and Synthetic Biology}. In \bibinfo{booktitle}{\emph{Proceedings of the AAAI
  Conference on Artificial Intelligence (AAAI)}}.
\newblock
\newblock
\shownote{Senior Member Blue Skies Track.}


\bibitem[\protect\citeauthoryear{Southey, Bowling, Larson, Piccione, Burch,
  Billings, and Rayner}{Southey et~al\mbox{.}}{2005}]%
        {Southey05:Bayes}
\bibfield{author}{\bibinfo{person}{Finnegan Southey}, \bibinfo{person}{Michael
  Bowling}, \bibinfo{person}{Bryce Larson}, \bibinfo{person}{Carmelo Piccione},
  \bibinfo{person}{Neil Burch}, \bibinfo{person}{Darse Billings}, {and}
  \bibinfo{person}{Chris Rayner}.} \bibinfo{year}{2005}\natexlab{}.
\newblock \showarticletitle{Bayes' Bluff: Opponent Modelling in Poker}. In
  \bibinfo{booktitle}{\emph{Proceedings of the 21st Annual Conference on
  Uncertainty in Artificial Intelligence (UAI)}}. \bibinfo{pages}{550--558}.
\newblock


\bibitem[\protect\citeauthoryear{Tang, Wang, and Zhang}{Tang
  et~al\mbox{.}}{2016}]%
        {Tang16:Optimal}
\bibfield{author}{\bibinfo{person}{Pingzhong Tang}, \bibinfo{person}{Zihe
  Wang}, {and} \bibinfo{person}{Xiaoquan Zhang}.}
  \bibinfo{year}{2016}\natexlab{}.
\newblock \showarticletitle{Optimal commitments in auctions with incomplete
  information}. In \bibinfo{booktitle}{\emph{Proceedings of the ACM Conference
  on Economics and Computation (EC)}}.
\newblock


\end{thebibliography}

\end{document}